\DeclareMathAlphabet{\mathantt}{OT1}{antt}{li}{it}
\DeclareMathAlphabet{\mathpzc}{OT1}{pzc}{m}{it}
\newtheorem{theorem}{Theorem}
\newtheorem{lemma}[theorem]{Lemma}
\DeclareFontFamily{OT1}{pzc}{}
\DeclareFontShape{OT1}{pzc}{m}{it}%
  {<-> s * [1.1] pzcmi7t}{}
\DeclareMathAlphabet{\mathpzc}{OT1}{pzc}%
                     {m}{it}
\DeclareMathOperator{\argmax}{\arg\max}
\begin{document}

\title{Task Offloading Optimization in Mobile Edge Computing under Uncertain Processing Cycles and Intermittent Communications}

\author{Tao Deng,
 Zhanwei Yu,
 and Di Yuan\\
 \IEEEcompsocitemizethanks{\IEEEcompsocthanksitem T.\ Deng is with the School of Computer Science and Technology, Soochow University, Suzhou, Jiangsu 215006, China.\protect\\

E-mail: dengtao@suda.edu.cn
\IEEEcompsocthanksitem Z.\ Yu and D.\ Yuan are with the Department of Information Technology, Uppsala University, 751 05 Uppsala, Sweden.\protect\\
 E-mail: zhanwei.yu; di.yuan@it.uu.se
}

}

\IEEEtitleabstractindextext{%
\begin{abstract}
Mobile edge computing (MEC) has been regarded as a promising approach to deal with explosive computation requirements by enabling cloud computing capabilities at the edge of networks.
Existing models of MEC impose some strong assumptions on the known processing cycles and unintermittent communications.
However, practical MEC systems are constrained by various uncertainties and intermittent communications, rendering these assumptions impractical.
In view of this, we investigate how to schedule task offloading in MEC systems with uncertainties.
First, we derive a closed-form expression of the average offloading success probability in a device-to-device (D2D) assisted MEC system with uncertain computation processing cycles and intermittent communications.
Then, we formulate a task offloading maximization problem (TOMP), and prove that the problem is NP-hard.
For problem solving, if the problem instance exhibits a symmetric structure, we propose a task scheduling algorithm based on dynamic programming (TSDP).
 By solving this problem instance, we derive a bound to benchmark sub-optimal algorithm.
 For general scenarios, by reformulating the problem, we propose a repeated matching algorithm (RMA).
Finally, in performance evaluations, we validate the accuracy of the closed-form expression of the average offloading success probability by Monte Carlo simulations, as well as the effectiveness of the proposed algorithms.
\end{abstract}

\begin{IEEEkeywords}
D2D, dynamic programming, MEC, intermittent communication, NP-hard, repeated matching, uncertain computation processing cycles.
\end{IEEEkeywords}}

\maketitle

\IEEEdisplaynontitleabstractindextext

\IEEEpeerreviewmaketitle

\IEEEraisesectionheading{\section{Introduction}\label{sec:introduction}}
Mobile applications running on smart devices bring explosive computation requirements.
Sometimes the required resources by the applications exceed the computational capacity of the devices.
With cloud computing, mobile devices can offload their computing tasks to the cloud via wireless networks.
However, this approach will add the burden of backhaul links and transmission cost.

Mobile edge computing (MEC) has been regarded as a promising approach to
deal with computation offloading \cite{Bon2012Fog}.
By the approach, mobile devices can offload their computing tasks to edge nodes configured with computing resources, e.g., base stations (BSs), so as to reduce the computational delay, energy consumption, and so on.
In addition to offloading the tasks to BSs that is subject to a limited spectrum, device-to-device (D2D) assisted MEC is an effective solution to improve the efficiency of MEC by exploiting the nearby devices' spare computing resources \cite{Xu2018Sur,Waqas2020A}.
In order to realize this paradigm, existing models encounter two challenges.
First, due to the device mobility, the communication connection between devices is intermittent.
However, most of existing models do not consider this aspect.
In addition, most of existing models assume that the number of processing cycles of a task is known in advance.
This assumption is too ideal because for some tasks the number of processing cycles may be dependent on input size, but the relation still can be complex and only some statistical data and properties are available in a practical MEC system \cite{BLiang2019In}.
Thus, it is hard to predict the number of processing cycles.
Therefore, existing models need to be enhanced to cope with both the uncertain processing cycles and intermittent communications.

We investigate how to schedule task offloading in a D2D-assisted MEC system with uncertain computation processing cycles and intermittent communications.
Our objective is to maximize the average offloading success probability over all the tasks.
The main contributions of this paper are as follows.
\begin{itemize}
\item
First, a closed-form expression of the average offloading success probability is derived.
Based on this expression, a task offloading maximization problem (TOMP) is formulated.

\item
Second, the hardness of the problem based on a reduction from the Knapsack problem is proved.

\item
Moreover, to solve the problem, a task scheduling algorithm based on dynamic programming (TSDP) is proposed for the scenario where the transmission capacities and residual time for all MEC nodes are uniform.
By solving the uniform scenario, a bound on the optimization problem is derived, which can be used to benchmark any sub-optimal algorithm.
\item
 For general scenarios of TOMP, by reformulating the problem, a repeated matching algorithm (RMA) is proposed to solve the problem.
In the algorithm, a series of matching problems are solved, and the input matrix is updated in every iteration, until the solution can not be improved.
\item
Finally, the effectivenesses of the proposed algorithms is validated through our performance evaluations,
that is, the accuracy of the closed-form expression of the offloading success probability, small gaps (less than $0.55\%$) between the closed-form expression and Monte Carlo simulation results, etc. It is shown that in the uniform scenario, TSDP is effective as its solution is close to the bound overall; for general scenarios, RMA outperforms other algorithms.
\end{itemize}

The remainder of this paper is organized as follows.
Section II presents the related works.
Section III presents two preparatory propositions.
Section IV introduces the system scenario, computation processing model, mobility model, and task offloading model.
Section V first formulates our optimization problem, and then proves that the problem is NP-hard.
Section VI introduces the TSDP algorithm for the uniform scenario, and the RMA algorithm for general scenarios.
Section VII presents performance evaluations.
Finally, Section VIII concludes this paper.
\section{Related work}
The existing works on D2D-assisted MEC can be divided into two categories.

The works in \cite{Hamdi2022,Zeng2022Delay,Peng2021D2d,MChen2022Sig,Zhou2021Inc,Pu2016D2d,Li2021An} investigate task offloading in static environments where they do not consider mobility.
In \cite{Hamdi2022}, the work models a joint task scheduling and power allocation optimization problem.
For problem solving, the authors first decouple the problem into a power allocation problem and an offloading assignment problem, and then propose the conjugate gradient algorithm and the Hungarian algorithm to solve the corresponding problems, respectively.
In \cite{Zeng2022Delay}, the authors propose a two-step algorithm to jointly optimize service sharing, computation offloading, and bandwidth allocation.
In \cite{Peng2021D2d}, the work models a jointly network-wide delay and power consumption optimization problem, and proposes an online resource scheduling algorithm.
The works in \cite{MChen2022Sig,Zhou2021Inc,Pu2016D2d,Li2021An} analyze the performance of D2D-assisted MEC by proposing some incentive mechanisms.
In \cite{MChen2022Sig} and \cite{Zhou2021Inc}, the works propose contract-based incentive mechanisms to motivate local MEC nodes to take part in D2D computation and content offloading.
In \cite{Pu2016D2d}, the work formulates a time-average energy consumption minimization problem subject to some incentive constraints.
The work in \cite{Li2021An} models an incentive-aware optimization problem based on a novel utility function, and develops a price-based algorithm.

The works in \cite{Zhou2020Fre,Han2020Opp,Mu2020Online,Qin2021Sparse,Wang2019Mobil,Saleem2021Mobil,Wang2018Us,Wang2014Mobi,Chen2018Oppor,Ahani2019BS} investigate task offloading in mobile environments.
In \cite{Zhou2020Fre}, the authors model a utility maximization problem considering the transmission overhead and the freshness of contents, and propose two algorithms for problem solving.
In \cite{Han2020Opp}, the work proposes a Markov decision process framework to analyze the average delay and cost in vehicular networks.
In \cite{Mu2020Online}, the authors investigate the case of one requester generating tasks and one helper with computing resources.
 They model an energy minimization scheduling problem that takes the arrival processes of tasks, the helper's resources, and channel conditions into consideration.
 For problem solving, they design an online algorithm to derive the optimum scheduling policy.
In \cite{Qin2021Sparse}, the authors maximize data offloading ratio subject to the delay constraint, and develop a greedy algorithm to solve the problem.
In \cite{Wang2019Mobil}, the authors investigate a three-layer MEC architecture, where a sojourn time model is used to characterize users' mobility and the sojourn time of users is assumed to follow an exponential distribution.
They divide the problem into a task scheduling problem and a resource allocation problem, and propose algorithms to solve them.
In \cite{Saleem2021Mobil}, the authors formulate a jointly task scheduling and power allocation optimization problem, which is a mixed-integer non-linear programming (MINLP) problem.
They first propose an algorithm based on genetic algorithm to solve the MINLP problem, and then propose a low complexity mobility-aware task scheduling algorithm.
In \cite{Wang2018Us}, the authors formulate a task allocation problem as a constant satisfaction problem that takes into account mobility, task properties, and network constraints.
For problem solving, they propose a lightweight heuristic algorithm.
In \cite{Wang2014Mobi}, the authors consider one task.
They divide the task into sub-tasks, and investigate how to offload the sub-tasks to other mobile computing nodes.
In \cite{Chen2018Oppor}, the authors address a delay-cost tradeoff optimization problem in
opportunistic task scheduling scenarios.
In \cite{Ahani2019BS}, the authors develop a BS-assisted computation offloading scheme.
All these works assume that the processing cycles of tasks are known in advance.
This assumption is too ideal because for some tasks the number of processing cycles may be dependent on input size, but the relation still can be complex and only some statistical data and properties are available.
This uncertainty brings great challenges to the modelling and optimization of task scheduling.

A few works investigate uncertainty in processing cycles \cite{Hou2021Task,Li2022Non}.
In \cite{Hou2021Task}, the authors formulate a task offloading problem and propose an energy-minimized solution with probabilistic deadline guarantee (EPD).
Simulation results show that EPD provides significant gains in energy saving.
In \cite{Li2022Non}, the author considers both offline and online non-clairvoyant task offloading, and proposes a non-clairvoyant task offloading algorithm for offline task offloading and a randomised online task offloading algorithm for online task offloading.
However, the works in \cite{Hou2021Task} and \cite{Li2022Non} are different from ours.
In our scenario, in addition to the uncertain processing cycles, the communication connection between the devices is intermittent.
Thus, the system modeling and optimization of our scenario is more challenge than that of \cite{Hou2021Task} and \cite{Li2022Non}.

\section{Preliminaries}

\textit{Proposition 1.} Suppose that $a$ and $b$ are two continuous positive random variables with probability density function (pdf) $f_a(t)$ and $f_b(t)$, respectively.
The Laplace transform of $f_a(t)$ and $f_b(t)$ are denoted by $f^*_a(s)$ and $f^*_b(s)$, respectively.
The set of poles of $f^*_a(s)$ is denoted by $\Omega_a$.
It follows from \cite{Fang1998Channel} that
\begin{equation}
\begin{aligned}
p(a>b)=-\underset{q \in \Omega_a}\sum\underset{s=q}{\text{Res}}\frac{f^*_b(s)}{s}f_a^*(-s),
\end{aligned}
\label{XY}
\end{equation}
where $\underset{s=q}{\text{Res}}$ represents the residue at pole $s = q$.

\textit{Proposition 2}. Suppose that $a_0, a_1, a_2, \dots, a_k$ are continuous positive  independent and identically distributed (i.i.d) random variables with pdf $f_a(t)$.
 Suppose that $b$ is a continuous positive random variable with pdf $f_b(t)$.
It follows from \cite{Fang1998Channel} that
\begin{equation}
\begin{aligned}
&p(a_0+a_1+a_2+\dots+a_k<b)\\
&=-\underset{q \in \Omega_b}\sum\underset{s=q}{\text{Res}}\frac{f_{a_0}^*(s){f^*(s)}^k}{s}f^*_b(-s).
\end{aligned}
\label{XXY}
\end{equation}

\section{System Model}
\subsection{System Scenario}
\begin{figure}[ht]
\centering
\includegraphics[scale=0.65]{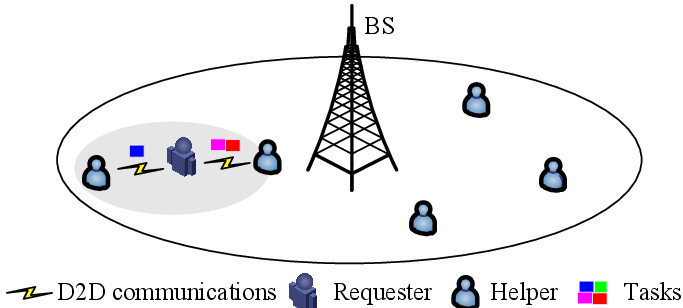}
\begin{center}
\caption{System scenario.}
\label{SystemScenario}
\end{center}
\end{figure}
We consider a D2D-assisted MEC system with one node\footnote[1]{Our model and algorithm can be applied also to the scenario of multiple requesters.} that has the task offloading requirements, referred to as \emph{requester}, and $H$ nodes with spare computation resources that will help offload the requester's tasks, referred to as \emph{helpers}, as shown in Fig.~\ref{SystemScenario}.
The set of helpers is denoted by $\mathcal{H}$, $\mathcal{H}=\{1,2,\dots,H\}$.
The requester produces $R$ tasks.
The set of tasks is denoted by $\mathcal{R}$, $\mathcal{R}=\{1,2,\dots,R\}$.
Denote by $l_i$ the size of task $i$.
We define our optimization matrix, denoted by $\bm{x}$,
\begin{equation}
\begin{aligned}
\bm{x}=\{x_{ij}, i\in \mathcal{R}~\text{and}~j\in \mathcal{H} \},
\end{aligned}
\end{equation}
where $x_{ij} \in \{0, 1\}$, which is one if and only if task $i$ is offloaded to helper $j$.
We consider that there is a preparation time duration in which the requester transmits the tasks to helpers by D2D communications.
Denote by $E_j$ the transmission capacity from the requester to helper $j$ in the preparation time duration.
It follows that for helper $j$,
\begin{equation}
\begin{aligned}
\sum_{i=1}^{R}  x_{ij}l_i \leq E_j.
\label{CapConst}
\end{aligned}
\end{equation}
For any task $i\in \mathcal{R}$, it follows that
\begin{equation}
\begin{aligned}
\sum_{j=1}^{H}  x_{ij} \leq 1.
\label{NumCons}
\end{aligned}
\end{equation}
\subsection{Computation Processing Model}
Denote by $t_{ij}$ the processing cycles for helper $j$ to completely compute task $i$.
We consider that $t_{ij}$ is a random variable.
Denote by $h_{ij}(t)$ the pdf of $t_{ij}$ with mean $\textbf{E}(t_{ij})=\xi_{ij}$ and variance $\textbf{Var}(t_{ij})=\sigma^2_{ij}$.

\subsection{Mobility Model}
The requester can communicate with any helper when they move into the range of D2D communication.
Due to the nodes' movement, the communication process is intermittent.
Denote by $t^c_{jk}$ the $k$-th communication time period (CTP) of helper $j$.
The offloading process happens only in the first CTP.
The $t^c_{jk}$'s are i.i.d. random
variables with mean $\textbf{E}(t^c_{jk})=\mu_j$ and variance $\textbf{Var}(t^c_{jk})=\sigma^2_{c_j}$, $k=2,3,\dots,+\infty$.
Denote by $f_j(t)$ the pdf of $t^c_{jk}$.
Denote by $t^r_{j1}$ the residual CTP from the moment that a task is offloaded to helper $j$ to the moment that the first CTP ends.
Denote by $f_{r_j}(t)$ the pdf of $t^r_{j1}$.
It follows from the residual life theorem \cite{cox1962renewal} that,
\begin{equation}
\begin{aligned}
f_{r_j}(t)=\mu_j \int_{x=t}^{+\infty} f_j(x) dx.
\end{aligned}
\end{equation}
Denote by $t^s_{jk}$ the $k$-th inter-communication time period (ICTP) between the requester and helper $j$ from the moment that the $k$-th CTP ends to the moment that the $(k+1)$-th CTP starts, $k=1,2,\dots,+\infty$.
The $t^s_{jk}$'s are i.i.d. random
variables with mean $\textbf{E}(t^s_{jk})=\gamma_j$ and variance $\textbf{Var}(t^s_{jk})=\sigma^2_{s_j}$.
Denote by $w_j(t)$ the pdf of $t^s_{jk}$.


\subsection{Offloading Success Model}
Suppose that task $i$ is offloaded to helper $j$.
The offloading successfully completes if the computational result of the task can be returned to the requester by D2D communications by the end of $t_{ij}$.
That is, the requester and the helper must be in a CTP period by the end of $t_{ij}$.
Otherwise, the helper has to send the computational result to the BS, and then the BS forwards the result to the requester.
But, this occupies the wireless resources, and increases transmission cost.
Therefore, we consider that fully D2D is counted as success, and delivering via the BS is not a success.
Fig. \ref{OffloadingProcess1} gives an example of two tasks and two helpers.
In this figure, tasks $1$ and $2$ are offloaded to helpers $1$ and $2$, respectively.
 By the end of $t_{11}$ and $t_{22}$, helper $1$ successfully completes this offloading because it can return the result to the requester.
But, helper $2$ is not success.
Therefore, the offloading success is closely related to $t_{ij}$, $t^c_{jk}$, and $t^s_{jk}$.
\begin{figure}[ht]
\centering
\includegraphics[scale=0.65]{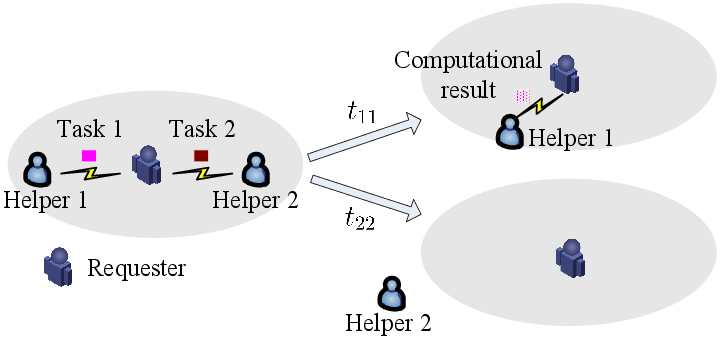}
\begin{center}
\caption{An example of the offloading process.}
\label{OffloadingProcess1}
\end{center}
\end{figure}

\begin{figure*}[ht]
\centering
\includegraphics[scale=0.95]{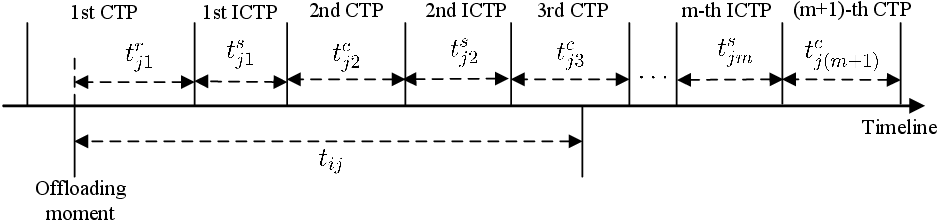}
\begin{center}
\caption{The timeline of the offloading process.}
\label{OffloadingProcess}
\end{center}
\end{figure*}
Fig. \ref{OffloadingProcess} describes the timeline of the offloading process for task $i$ and helper $j$, $i=1,2,\dots,R, ~j=1,2,\dots,H$.
The offloading is successful in the residual CTP only if
\begin{equation}
\begin{aligned}
t_{ij}<t^r_{j1}.
\end{aligned}
\end{equation}
In addition, the offloading is successful in the $k$-th CTP only if
\begin{equation}
\left\{
\begin{aligned}
{}& t^r_{j1}+t^s_{j1} < t_{ij} < t^r_{j1}+t^s_{j1}+ t^c_{j2},~k=2, \\
{}& t^r_{j1}+t^s_{j1}+\sum_{k'=2}^{k-1} (t^{s}_{jk'}+t^{c}_{jk'})<t_{ij}< \\
{}& ~~~t^c_{j0}+t^s_{j1}+\sum_{k'=2}^{k-1} (t^{s}_{jk'}+t^{c}_{jk'})+t^{c}_{jk},~k>2.
\end{aligned}
\right.
\end{equation}

The offloading success probability of task $i$ offloaded to helper $j$, denoted by $p_{ij}$, is expressed in (\ref{1111}) at the top of next page.
\begin{figure*}[htb]
\hrulefill
\normalsize
\begin{equation}
\begin{aligned}
p_{ij}
=& p(t^c_{j0}>t_{ij})+p(t^r_{j1}+t^s_{j1} < t_{ij} < t^r_{j1}+t^s_{j1}+ t^c_{j2})\\
&+\sum_{k=3}^{+\infty}p(t^r_{j1}+t^s_{j1}+\sum_{k'=2}^{k-1} (t^{c}_{jk'}+t^{s}_{jk'})<t_{ij}<t^c_{j0}+t^s_{j1}+\sum_{k'=2}^{k-1} (t^{c}_{jk'}+t^{s}_{jk'})+t^{c}_{jk})
\end{aligned}
\label{1111}
\end{equation}
\hrulefill
\vspace*{4pt}
\end{figure*}
Denote by $f^*_{j}(s) $, $w^*_{j}(s) $, and $h^*_{ij}(s)$, the Laplace transform of $f_j(t) $, $w_j(t) $, and $h_{ij}(t)$, respectively. Thus,
\begin{equation}
\left\{
\begin{aligned}
{}& f^*_j(s)=\int_{t=0}^{+\infty}f_j(t)e^{-st}dt,\\
{}& w^*_j(s)=\int_{t=0}^{+\infty}w_j(t)e^{-st}dt,\\
{}& h^*_{ij}(s)=\int_{t=0}^{+\infty}h_{ij}(t)e^{-st}dt.
\label{Lap}
\end{aligned}
\right.
\end{equation}
Denote by $f^*_{r_j}(s)$ the Laplace transform of $f_{r_j}(t)$. It follows that
\begin{equation}
\begin{aligned}
f^*_{r_j}(s)=\frac{\mu_j(1-f^*_j(s))}{s}.
\end{aligned}
\end{equation}


Applying Propositions 1 and 2 in Sect. II to (\ref{1111}), we derive (\ref{Resid}) at the top of next page.
\begin{figure*}[t]
\normalsize
\begin{equation}
\begin{aligned}
p_{ij}
=& p(t^c_{j0}>t_{ij})+p(t^r_{j1}+t^s_{j1} < t_{ij} < t^r_{j1}+t^s_{j1}+ t^c_{j2})\\
&+\sum_{k=3}^{+\infty}p(t^r_{j1}+t^s_{j1}+\sum_{k'=2}^{k-1} (t^{c}_{jk'}+t^{s}_{jk'})<t_{ij}<t^c_{j0}+t^s_{j1}+\sum_{k'=2}^{k-1} (t^{c}_{jk'}+t^{s}_{jk'})+t^{c}_{jk})\\
= &-\underset{q \in \Omega_f}\sum\underset{s=q}{\text{Res}}\frac{h^*_{ij}(s)}{s}f_{r_j}^*(-s)
+
\sum_{k=2}^{+\infty}
\underset{q \in \Omega_h}\sum\underset{s=q}{\text{Res}}\frac{f^*_{r_j}(s)f^*_j(s)^{k-2}w^*_j(s)^{k-1}(f^*_{j}(s)-1)}{s}h^*_{ij}(-s).
\end{aligned}
\label{Resid}
\end{equation}
\hrulefill
\vspace*{4pt}
\end{figure*}
In (\ref{Resid}), $\Omega_f$ and $\Omega_h$  represent the set of poles of $f_j^*(-s)$ and $h_{ij}^*(-s)$, respectively.
The offloading success probability of task $i$, denoted by $p_i$, is expressed as
\begin{equation}
\begin{aligned}
p_i = \sum_{j=1}^{H}p_{ij}x_{ij}.
\end{aligned}
\label{13}
\end{equation}


Thus, the average offloading success probability over all tasks, denoted by $p$,
is expressed as
\begin{equation}
\begin{aligned}
p = \frac{1}{R}\sum_{i=1}^{R}p_{i}.
\end{aligned}
\label{15}
\end{equation}

\section{Problem Modelling}
\subsection{Problem Formulation}
Our problem is to maximize the average offloading success probability over all tasks.
The task offloading maximization problem (TOMP) is expressed in (\ref{MATO}).
\begin{figure}[!h]
\begin{subequations}
\begin{alignat}{2}
&\max\limits_{\bm{x}}\quad p
 \label{F_e} \\
\text{s.t}. \quad
& (\ref{CapConst}),~(\ref{NumCons}),\nonumber\\
& x_{ij} \in \{0,1 \}, ~~i\in \mathcal{R}, j\in \mathcal{H}.
\end{alignat}
\label{MATO}
\end{subequations}
\end{figure}

\subsection{Complexity Analysis}
\begin{theorem}
\textbf{TOMP} is $\mathcal{NP}$-hard.
\end{theorem}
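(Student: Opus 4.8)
The plan is to establish NP-hardness by a polynomial-time reduction from the 0-1 Knapsack problem, whose decision version is NP-complete. I would phrase TOMP as a decision problem: given a target $P$, decide whether some feasible $\bm{x}$ satisfying (\ref{CapConst}) and (\ref{NumCons}) attains $p \ge P$.

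The first step is to expose the combinatorial core by specializing to a single helper, $H=1$. Then (\ref{NumCons}) reads $x_{i1}\le 1$ and is vacuous for binary variables, while (\ref{CapConst}) collapses to the single budget constraint $\sum_{i=1}^{R} x_{i1}\,l_i \le E_1$, and the objective becomes $p=\frac{1}{R}\sum_{i=1}^{R} p_{i1}x_{i1}$. Up to the fixed factor $1/R$, this is exactly a 0-1 Knapsack instance: item $i$ has weight $l_i$ and profit $p_{i1}$, the capacity is $E_1$, and we seek a maximum-profit subset fitting in the capacity. Hence any algorithm for TOMP solves Knapsack on this restricted family of instances.

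Given a Knapsack instance with weights $w_i$, profits $v_i$, capacity $W$, and target value $V$, I would create $R$ tasks and one helper, and set $l_i=w_i$, $E_1=W$, and $P=V/R$, after normalizing the profits so that each encoded success probability is a legitimate value in $(0,1)$. The crux --- and the step I expect to be the main obstacle --- is realizability: the coefficients $p_{i1}$ are not free parameters but are produced by the closed form (\ref{Resid}) from the processing-cycle, CTP, and ICTP distributions, so I must show these distributions can be chosen, in polynomial time, to make $p_{i1}$ match the prescribed profits. I would control this by taking the shared inter-communication period of helper $1$ very long; then the contributions of the second and later CTPs in (\ref{Resid}) become negligible and $p_{i1}\approx p\!\left(t^r_{j1}>t_{i1}\right)$, the probability of finishing within the first residual CTP. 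Choosing each $h_{i1}(t)$ from a one-parameter family (e.g.\ exponential with mean $\xi_{i1}$), this probability is continuous and strictly decreasing in $\xi_{i1}$ and, by taking the inter-communication period long enough, sweeps essentially all of $(0,1)$; by the intermediate value theorem every prescribed target in $(0,1)$ is attained, and the corresponding $\xi_{i1}$ is recovered by inverting the monotone closed form.

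Finally I would close the reduction. Feasible assignments $\bm{x}$ for the constructed single-helper instance correspond bijectively to feasible knapsack packings with the same weight usage and the same profit up to the factor $1/R$, so the TOMP decision instance is a yes-instance if and only if the Knapsack instance admits a packing of profit at least $V$. Since all parameters are computed in polynomial time --- the only delicate point being that the distribution parameters need be accurate only to within $1/(2R)$ of the normalized profits, because distinct subset sums of the integer profits differ by at least one, so bounded precision preserves the comparison against $P$ --- the reduction is polynomial and TOMP is NP-hard.
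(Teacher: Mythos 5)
Your proof takes essentially the same route as the paper: a reduction from 0-1 Knapsack obtained by restricting to a single helper, with task sizes as weights, the transmission capacity as the knapsack capacity, and the success probabilities as item values. You go further than the paper's (very terse) argument by explicitly addressing realizability --- that the $p_{i1}$ are induced by distributions via the closed form rather than being free parameters --- which is a genuine subtlety the paper's proof silently assumes away; your treatment of it is sound.
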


\begin{proof}
The proof is established by a reduction from the Knapsack problem that is NP-hard \cite{garey1979computers}.
Consider a Knapsack problem with a set of $N$ items and a knapsack capacity $W$.
Each item $i$, $i=1,2,\dots,N$, has a positive weight $w_i$ and a positive value $\theta_i$.
The Knapsack problem asks which items to be selected for the knapsack such that the total value of the selected items is maximized subject to the capacity of the knapsack.

We construct a TOMP reduction in the following.
We set $H=1$. That is, there is only one helper.
The amount of capacity of the helper, i.e., $E$, corresponds to $W$.
The size of task $i$, i.e., $l_i$, corresponds to $w_i$.
The offloading success probability of task $i$, i.e., $p_i$, corresponds to $\theta_i$.
From the above, solving the defined instance of TOMP will solve the Knapsack problem which is NP-hard. Hence the conclusion.
\end{proof}

\section{Algorithm Design}

\subsection{Uniform Scenario}
\subsubsection{Problem Analysis}
For the uniform scenario where $\mu_{j}=\mu$, $\gamma_{j}=\gamma$, and $E_j=E$, $ j\in \mathcal{H}$,
we propose a TSDP algorithm, as well as derive an upper bound of the global optimum for performance benchmarking.
In the uniform scenario, the helpers are identical in terms of performance.
Thus, it is unnecessary to identify which helper computes which task.
The offloading performance depends only on whether or not task $i$ is offloaded to the helpers.
The optimization variable $\bm{x}$ is re-defined as
 \begin{equation}
\begin{aligned}
\bm{x}=\{x_i, i\in \mathcal{R} \}.
\end{aligned}
\label{xi}
\end{equation}
 In (\ref{xi}), $x_i$ is one if and only if task $i$ is offloaded.
TOMP can then be reformulated as (\ref{MATOre1}).
\begin{figure}[!h]
\begin{subequations}
\begin{alignat}{2}
&\max\limits_{\bm{x}}\quad p
 \label{F_ere} \\
\text{s.t}. \quad
& \sum_{i=1}^{R}x_i l_i \leq E,~ \text{for any helper}, \label{F_bre} \\
& x_i \in \{0,1 \}, ~~i\in \mathcal{R}.
\end{alignat}
\label{MATOre1}
\end{subequations}
\end{figure}

In order to solve (\ref{MATOre1}), we first relax (\ref{F_bre}), and then derive a relaxation of (\ref{MATOre1}), which is expressed in (\ref{MATOre}).
\begin{figure}[!h]
\begin{subequations}
\begin{alignat}{2}
&\max\limits_{\bm{x}}\quad p
 \label{F_ere} \\
\text{s.t}. \quad
& \sum_{i=1}^{R} x_i l_i \leq E H, \label{F_brer} \\
& x_i \in \{0,1 \}, ~~i\in \mathcal{R}.
\end{alignat}
\label{MATOre}
\end{subequations}
\end{figure}

\begin{theorem}
The optimum of (\ref{MATOre}) is an upper bound of that of (\ref{MATOre1}).
\end{theorem}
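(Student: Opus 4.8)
The plan is to prove the statement by a standard optimization-relaxation argument: I would show that every point feasible for (\ref{MATOre1}) is also feasible for (\ref{MATOre}), while the two problems share the identical objective $p$, so that maximizing the same objective over the larger feasible set can only increase or preserve the optimal value. No estimation, limiting, or probabilistic arguments are needed, since $p$ in the uniform scenario is simply a fixed linear function of the reduced variable $\bm{x}$ (the per-task success probabilities no longer depend on the helper index).

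First, I would pin down the precise meaning of constraint (\ref{F_bre}). Because the reduced variable $x_i$ only records whether task $i$ is offloaded and not to which helper, feasibility of (\ref{MATOre1}) must be read as: the offloaded tasks, i.e. those with $x_i=1$, can be physically accommodated by the $H$ helpers, each of capacity $E$; equivalently, there exists an assignment of the offloaded tasks to the $H$ helpers such that the load placed on every helper does not exceed $E$. Fixing this interpretation is the one genuinely delicate point, and it is the step I would be most careful about, since the shorthand ``for any helper'' attached to a single aggregate inequality is otherwise ambiguous.

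Second, taking any $\bm{x}$ feasible for (\ref{MATOre1}) together with such an assignment, I would sum the per-helper capacity inequalities over all $H$ helpers. Each offloaded task contributes its size $l_i$ to exactly one helper's load, so the left-hand sides add up to the total offloaded size $\sum_{i=1}^{R} x_i l_i$, while the right-hand sides add up to $EH$. This yields $\sum_{i=1}^{R} x_i l_i \le EH$, which is precisely constraint (\ref{F_brer}); the binary restrictions $x_i\in\{0,1\}$ are common to both formulations. Hence $\bm{x}$ is feasible for (\ref{MATOre}), establishing the containment of feasible sets.

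Finally, since the objective $p$ is unchanged between the two formulations and the feasible set of (\ref{MATOre1}) is contained in that of (\ref{MATOre}), the maximum of $p$ over the feasible set of (\ref{MATOre}) is at least the maximum over the feasible set of (\ref{MATOre1}). Therefore the optimum of (\ref{MATOre}) is an upper bound on the optimum of (\ref{MATOre1}), as claimed. The entire difficulty of the argument resides in the feasibility interpretation of Step~1; once that is settled, the bound follows immediately from the monotonicity of the optimal value under enlargement of the feasible region.
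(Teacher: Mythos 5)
Your proposal is correct and takes essentially the same route as the paper: the paper's entire proof is the one-line observation that (\ref{MATOre}) has a larger solution space than (\ref{MATOre1}), which is exactly your feasible-set-containment argument. You simply make explicit what the paper leaves implicit, namely the interpretation of the per-helper constraint (\ref{F_bre}) and the summation over the $H$ helpers that yields $\sum_{i=1}^{R} x_i l_i \le EH$.
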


\begin{proof}
Compared to (\ref{MATOre1}), (\ref{MATOre}) has larger solution space.
Therefore, the optimum of (\ref{MATOre}) is an upper bound of that of (\ref{MATOre1}).
\end{proof}
We remark that the upper bound can be used to benchmark any sub-optimal algorithm.

\subsubsection{Dynamic Programming Algorithm}
We develop a dynamic programming (DP) algorithm to derive the optimum of (\ref{MATOre}).
We assume that $E$ is integer.
Denote by $x^*_i$ the optimum of $x_i$, $i \in \mathcal{R}$.
Denote by $p^*(r,q)$ the offloading success probability of the optimum
with regard to the first $r$ tasks using an amount of capacity of $q$.
The recursive function in Lemma \ref{Lemma_Recursive} is used for the value of $p^*(r,q)$.

\begin{lemma}\label{Lemma_Recursive}
The value of $p^*(r,q)$ is derived by the recursive function in (\ref{RecursiveFunction1111}).
\end{lemma}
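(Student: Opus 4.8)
The plan is to observe that the relaxed problem (\ref{MATOre}) is exactly a $0/1$ knapsack instance — the tasks in $\mathcal{R}$ are the items, the sizes $l_i$ are the weights, the per-task offloading success probabilities are the values, and the aggregate budget $EH$ in (\ref{F_brer}) is the knapsack capacity — and then to verify that the recursion in (\ref{RecursiveFunction1111}) computes its optimum by the standard optimal-substructure (cut-and-paste) argument, carried out by induction on the number of considered tasks $r$. Note first that the constant factor $1/R$ in the objective (\ref{15}) does not affect the maximizer, so it suffices to reason about the total success probability $\sum_{i} p_i x_i$ and rescale at the end.

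First I would fix the semantics of the table entry: $p^*(r,q)$ is the maximum total offloading success probability attainable by any feasible selection $x_1,\dots,x_r \in \{0,1\}$ of the first $r$ tasks whose sizes satisfy $\sum_{i=1}^{r} x_i l_i \le q$. The base cases are $p^*(0,q)=0$ for every $q \ge 0$ (no task selected contributes nothing), while any call with negative remaining budget is treated as infeasible. With these conventions the recursion is well defined, and I would state explicitly that its $l_r > q$ guard corresponds precisely to the situation in which task $r$ cannot be admitted without violating (\ref{F_brer}).

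The inductive step is then the heart of the argument. I would fix an optimal selection for the first $r$ tasks under budget $q$ and condition on the decision for task $r$. In the branch $x_r=0$, deleting task $r$ leaves a feasible selection of tasks $1,\dots,r-1$ under the same budget $q$; its value is at most $p^*(r-1,q)$, and conversely any optimizer of the sub-instance is feasible here, so this branch contributes exactly $p^*(r-1,q)$. In the branch $x_r=1$, which is admissible only when $l_r \le q$, the restriction to tasks $1,\dots,r-1$ is feasible under the reduced budget $q-l_r$, and by the exchange argument this restriction must itself be optimal for that sub-instance — otherwise replacing it by a strictly better sub-selection and re-inserting task $r$ would contradict optimality of the original solution. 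Hence this branch contributes $p^*(r-1,\,q-l_r)+p_r$, where $p_r$ denotes the success probability of offloading task $r$ to a (uniform) helper. Taking the maximum over the admissible branches reproduces (\ref{RecursiveFunction1111}), and the induction hypothesis supplies the optimality of the two sub-table entries.

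The main obstacle is purely one of bookkeeping rather than depth: I must align the feasibility conditions in the recursion with the relaxed single-capacity constraint (\ref{F_brer}) of (\ref{MATOre}) — emphasizing that the DP enforces the aggregate bound $EH$, not the per-helper bounds of (\ref{MATOre1}) — and ensure that $p^*(r-1,\,q-l_r)$ is never evaluated at a negative budget, which the $l_r \le q$ guard guarantees. Once the conventions are stated cleanly, the exchange argument closes the proof without further computation.
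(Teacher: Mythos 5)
Your proposal is correct and follows essentially the same route as the paper: induction on the number of considered tasks $r$, with the standard $0/1$-knapsack optimal-substructure argument justifying the recursion in (\ref{RecursiveFunction1111}). The paper's own proof is terser --- it asserts the combination step rather than carrying out the cut-and-paste argument --- so your explicit exchange argument, the $r=0$ base case, and the careful handling of the $l_r \le q$ guard add rigor but not a genuinely different idea.
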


 \begin{equation}
p^*(r,q)=\left\{
\begin{aligned}
{}&\underset{x_r \in \{0,1 \}~~~~~~~~~~~~~~~~~~~~~~~~~~~~~~~~~~~}{\max~ \{\psi(x_r)+p^*(r-1,q- x_r l_r)\}},&& r>1,\\
{}& \underset{x_r \in \{0,1 \}~~~~~~~}{\max~ \{\psi(x_r)\}},&& r=1.
\end{aligned}
\right.
\label{RecursiveFunction1111}
\end{equation}
In (\ref{RecursiveFunction1111}), $\psi(x_r)$ is the offloading success probability depending on whether task $r$ is offloaded or not. Obviously, if it is not offloaded, i.e., $x_r=0$, $\psi(x_r)=0$.
If $x_r=1$, $\psi(x_r)$ can be derived by (\ref{Resid}).

\begin{proof}
We prove Lemma 2 by induction.
For $r=1$, the conclusion is obvious.
More specifically, if $q\ge l_r$, the optimum is $x^*_r=1$.
Otherwise, the optimum is $x^*_r=0$.
For $r>1$,
by (\ref{RecursiveFunction1111}),
 \begin{equation}
\begin{aligned}
 p^*(r+1,q)=\underset{x_{r+1} \in \{0,1 \}~~~~~~~~~~~~~~~~~~~~~~~~~~~~~~~~~~~~~~~}{\max~ \{\psi(x_{r+1})+p^*(r,q- x_{r+1} l_{r+1})\}}.
\end{aligned}
\end{equation}
For any $x_{r+1}$, $p^*(r,q- x_{r+1} l_{r+1})$ is the offloading success probability of the optimum with regard to the first $r$ tasks using an amount of capacity of $q-x_{r+1} l_{r+1}$, and $\psi(x_{r+1})$ is the offloading success probability whether task $r+1$ is offloaded or not, thus together leading to that $p^*(r+1,q)$ is the offloading success probability of the optimum.
Hence the conclusion.
\end{proof}
\begin{figure*}[ht]
\centering
\includegraphics[scale=0.8]{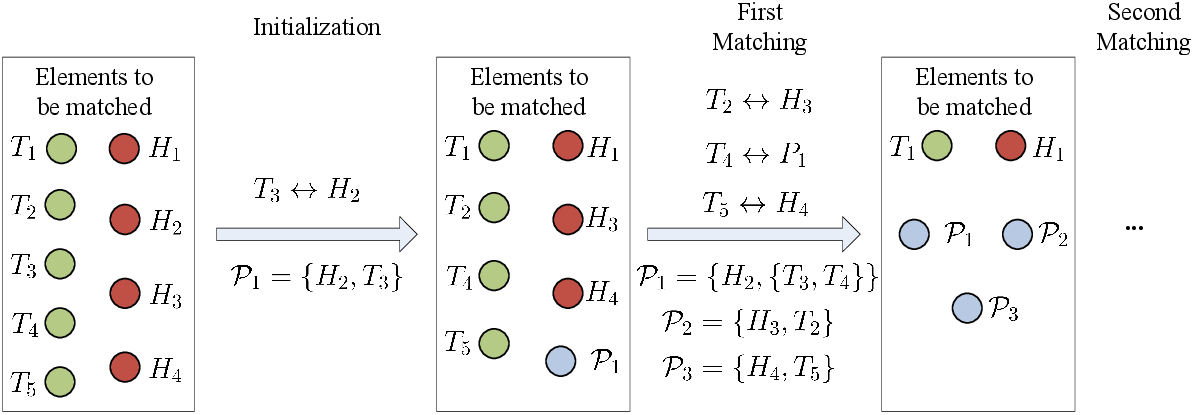}
\begin{center}
\caption{An example of the RMA algorithm.}
\label{ExampleRepeatedMatching}
\end{center}
\end{figure*}
\subsubsection{Algorithm Summary and Complexity Analysis}
Algorithm \ref{alg2} describes the proposed dynamic programming (DP) algorithm.
The input of the algorithm includes $\mu$, $\lambda$, $\delta$, $l$ and $E'$.
First, all the entries of $\bm{x}$ are initialized by Line \ref{ini}, i.e., $\bm{x} \leftarrow [0]_{1\times R}$.
Then, Lines 5 and 6 compute the optimum of $x_1$.
Finally, Lines 8 and 9 derive the optimum of $x_r$, $r=2,3,\dots,R$.
The computational complexity of Algorithm \ref{alg2} is of $O(RHE)$.
 \begin{algorithm}
\caption{DP algorithm}
\label{alg2}
\begin{algorithmic}[1]
\algsetup{linenosize=\tiny}
\small
\REQUIRE $\mathcal{R}$, $\bm{\mu}$, $\bm{\gamma}$, $\bm{\xi}$, $\bm{l}$, $H$, and $E$
\ENSURE $\bm{x}$
\STATE $\bm{x} \leftarrow [0]_{1\times R}$ \label{ini}
\FOR{$q=0$ : $EH$}
\FOR{$r=1$ : $R$}
\IF {$r=1$}
\IF{$q\ge l_r$}
\STATE $p^*(r,q) \leftarrow \psi(1) $
\STATE $ x^*_r \leftarrow 1 $
\ELSE
\STATE  $p^*(r,q) \leftarrow 0$
\STATE  $ x^*_r \leftarrow 0$
\ENDIF
\ELSE
\STATE $p^*(r,q) \leftarrow \underset{x_r \in \{0,1 \}~~~~~~~~~~~~~~~~~~~~~~~~~~~~~~~~~~~~}{\max~ \{\psi(x_r)+p^*(r-1,q- x_r l_r)\}} $
\STATE  $ x^*_r \leftarrow \underset{x_r \in \{0,1 \}~~~~~~~~~~~~~~~~~~~~~~~~~~~~~~~~~~~~~~~~~~~~~}{\argmax\{\psi(x_r)+p^*(r-1,q- x_r l_r)\}~~~~~}$
\ENDIF
\ENDFOR
\ENDFOR
\RETURN $\bm{x}$
\end{algorithmic}
\end{algorithm}

\begin{theorem}
Algorithm \ref{alg2} derives the optimum of the problem in (\ref{MATOre}).
\end{theorem}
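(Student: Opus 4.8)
The plan is to reduce the correctness of Algorithm \ref{alg2} to the optimal-substructure recursion already established in Lemma \ref{Lemma_Recursive}, and then to verify that the algorithm evaluates that recursion in a valid order. The starting observation is definitional: since $p^*(r,q)$ denotes the offloading success probability of the optimum over the first $r$ tasks using an amount of capacity $q$ as a budget, the entry $p^*(R,EH)$ is by construction the optimum of (\ref{MATOre}), whose feasible region consists of all subsets of the $R$ tasks with total size at most $EH$. (The constant factor $1/R$ in the objective $p$ is a monotone scaling and does not affect the maximizer.) It therefore suffices to show that Algorithm \ref{alg2} assigns to every table entry $p^*(r,q)$ the value prescribed by the recursion (\ref{RecursiveFunction1111}), and in particular the correct value at $(R,EH)$.

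First I would argue that the algorithm faithfully realizes (\ref{RecursiveFunction1111}). The $r=1$ branch sets $p^*(1,q)=\psi(1)$ when $q\ge l_1$ and $p^*(1,q)=0$ otherwise, matching the base case of the recursion. The $r>1$ (ELSE) branch assigns the maximum of $\psi(x_r)+p^*(r-1,q-x_r l_r)$ over $x_r\in\{0,1\}$, which is exactly the inductive case. Hence, entry by entry, the update rules of Algorithm \ref{alg2} coincide with the two branches of (\ref{RecursiveFunction1111}).

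The key step is to certify that the two nested loops fill the table in an order consistent with the dependency structure of (\ref{RecursiveFunction1111}). When the algorithm evaluates $p^*(r,q)$ for $r>1$, it reads $p^*(r-1,q)$ (the choice $x_r=0$) and $p^*(r-1,q-l_r)$ (the choice $x_r=1$). The former is produced earlier in the current pass of the outer loop, at inner index $r-1<r$; the latter was produced in an earlier pass of the outer loop, when the capacity index equalled $q-l_r<q$. Thus both operands are already computed whenever they are needed, and a straightforward double induction over the pair $(q,r)$ shows that every entry equals the value prescribed by (\ref{RecursiveFunction1111}). Invoking Lemma \ref{Lemma_Recursive}, it follows that $p^*(R,EH)$ is the global optimum of (\ref{MATOre}), and tracing back the stored $x^*_r$ decisions recovers an optimal solution.

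I expect the main obstacle to be the boundary handling of the capacity constraint: the maximization in the $r>1$ branch must admit $x_r=1$ only when $q-l_r\ge 0$, since otherwise the index $q-x_r l_r$ is negative and the corresponding subproblem is infeasible. I would make this explicit by adopting the convention that $p^*(r,q)=-\infty$ for $q<0$ (equivalently, forbidding the offloaded branch in that case), so that infeasible partial solutions are never selected; with this convention the induction above carries through without change.
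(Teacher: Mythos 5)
Your proposal is correct and follows essentially the same route as the paper, which simply states that optimality follows from Lemma \ref{Lemma_Recursive}; you fill in the details the paper leaves implicit (that the loop order over $(q,r)$ respects the dependencies of the recursion, and that $p^*(R,EH)$ is the optimum of (\ref{MATOre})). Your observation about guarding the $x_r=1$ branch when $q-l_r<0$ is a legitimate boundary condition that the paper's algorithm and one-line proof gloss over, and your $-\infty$ convention resolves it cleanly without changing the argument.
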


\begin{proof}
The optimality is concluded by Lemma \ref{Lemma_Recursive}.
\end{proof}

\subsubsection{Obtaining Feasible Solution}
Algorithm 1 is from a relaxation point of view.
There is no guarantee that the solutions by the algorithm are feasible.
Thus, we propose TSDP to obtain a feasible solution of (\ref{MATOre1}).
Denote by $\mathcal{R}'$ the set of unassigned tasks.
A general description of the algorithm is the following.
We take any helper and the capacity is $E$.
Initially, $\bm{x}$ is set to be zero.
The algorithm starts with the first helper and optimizes the helper by the DP shown in Algorithm 1.
Once the current helper is optimized, $\mathcal{R}'$ will be updated accordingly.
We repeat the above process for the remaining $H-1$ identical helpers.
The overall progress of the TSDP algorithm is shown in Algorithm 2.
The complexity of the algorithm is $O(RHE)$.
 \begin{algorithm}
\caption{TSDP algorithm}
\begin{algorithmic}[1]
\algsetup{linenosize=\tiny}
\small
\STATE $\bm{x}^* \leftarrow [0]_{1\times R}$ \label{ini}
\FOR{$j=1:H$}
\STATE Derive the selected tasks by applying the DP algorithm.
\STATE $\bm{x}^*$ $\leftarrow$ the selected tasks.
\STATE Delete the selected tasks in $\mathcal{R}'$.
\ENDFOR
\RETURN $\bm{x}^*$
\end{algorithmic}
\end{algorithm}


\subsection{General Scenarios}
For general scenarios of TOMP, we develop a RMA algorithm in which we solve a series of matching problems, and input matrix is updated in every iteration, until the solution can not be improved.
\subsubsection{Matching Problem}
The basic matching problem is the following.
Given a set $\mathcal{B}$ with $d$ elements, $\mathcal{B}=\{b_1,b_2,\dots,b_d\}$.
A perfect matching on $\mathcal{B}$ is defined as a matching of elements in $\mathcal{B}$ such that each $b_i \in \mathcal{B}$ is matched with exactly one $b_j \in \mathcal{B}$, $i \neq j$.
Denote by $v_{ij}$ the value matching $b_i$ with $b_j$, $i,j=1,2,\dots,d,~ i\neq j$, where $v_{ij}=v_{ji}$.
Denote by $z_{ij}$ a binary variable, which is one if and only if $b_i$ is matched with $b_j$, otherwise zero.
The maximum value perfect matching problem is to find a perfect matching on $\mathcal{B}$ such that the sum of the values of the pairs of matched elements is maximized.
It can be formulated in (\ref{MP}).
\begin{figure}[!h]
\begin{subequations}
\begin{alignat}{2}
&\max\limits_{\bm{z}\in \{0,1\}}\quad \sum_{i=1}^{d}\sum_{j=1,j\neq i}^{d}v_{ij}z_{ij}
 \label{F_ere} \\
\text{s.t}. \quad
& \sum_{i=1,i\neq j}^{d}z_{ij}=1,~j=1,2,\dots,d, \\
& \sum_{j=1,j\neq i}^{d}z_{ij}=1,~i=1,2,\dots,d.
\end{alignat}
\label{MP}
\end{subequations}
\end{figure}

It is well known that (\ref{MP}) can be solved by Kuhn-Munkres (KM) algorithm \cite{Munkres1957}.
The complexity of KM is $O(d^3)$.
\subsubsection{RMA algorithm}
In order to design the RMA algorithm to solve the problem in (\ref{MATO}),
we need to reformulate the problem.
In our problem, $\mathcal{R}=\{1,2,\dots,R\}$ and $\mathcal{H}=\{1,2,\dots,H\}$ denote the sets of all tasks and helpers, respectively.
Denote by $\mathcal{K}$ a set, $\mathcal{K}=\mathcal{H} \times \mathcal{D}$, where $\mathcal{D}$ represents the set of all nonempty subsets of $\mathcal{R}$, $\mathcal{D}=\{\mathcal{G}_1,\mathcal{G}_2,\dots\}$.
If $\sum_{i\in \mathcal{G}_1}l_i \le E_j$, $(j,\mathcal{G}_1)\in \mathcal{K}$ is feasible.
Denote by $\mathcal{F}$ the set of all feasible elements of $\mathcal{K}$.
A packing, denoted by $\mathcal{P}$, is defined as a subset of $\mathcal{F}$ satisfying the property that
\begin{equation}
\begin{aligned}
(j_1,\mathcal{G}_1),(j_2,\mathcal{G}_2) \in \mathcal{P}\Rightarrow \mathcal{G}_1 \cap \mathcal{G}_2 = \emptyset ~\text{and}~ j_1 \neq j_2.
\end{aligned}
\end{equation}

Given a packing $\mathcal{P}$, we define three sets, i.e., $\mathcal{L}_1=\{j|(j,\mathcal{D})\notin \mathcal{P}\}$, $\mathcal{L}_2=\cup_{(j,\mathcal{D})\notin \mathcal{P}} \mathcal{G}$, and $\mathcal{L}_3=\mathcal{P}$.
The set $\mathcal{L}_1$ consists of all helpers that are not used, $\mathcal{L}_2$ consists of all tasks that are not allocated to a helper, and $\mathcal{L}_3$ denotes the set consisting of all used helpers with their allocated tasks.
Denote by $m_1$, $m_2$, and $m_3$ the cardinalities of $\mathcal{L}_1$, $\mathcal{L}_2$, and $\mathcal{L}_3$, respectively.
The offloading success probability of the packing is expressed as
\begin{equation}
\begin{aligned}
\delta m_2 +\sum_{(j,i)|(j,\mathcal{D})\in \mathcal{P},i\in \mathcal{D}} p_{ij},
\label{Packi}
\end{aligned}
\end{equation}
where $\delta$ is some small negative number, which is a penalty parameter so as to decrease the number of tasks that are not allocated.

RMA is to match the elements of $\mathcal{L}_1$, $\mathcal{L}_2$, and $\mathcal{L}_3$ with each other, thus generating new sets $\mathcal{L}'_1$, $\mathcal{L}'_2$, and $\mathcal{L}'_3$, such that the offloading success probability of the new packing is improved over $\mathcal{L}'_3$.
Fig. \ref{ExampleRepeatedMatching} gives an example of RMA in which the matched elements include five tasks and four helpers, i.e., $T_1,\dots,T_5$ and $H_1,\dots,H_4$.
We initialize that $T_3$ is matched with $H_2$, i.e., $\mathcal{P}_1=\{H_2,T_3\}$.
After the initialization, $\mathcal{L}_1=\{H_1,H_3,H_4\}$, $\mathcal{L}_2=\{T_1,T_2,T_4,T_5\}$, and $\mathcal{L}_3=\{\mathcal{P}_1\}$.
In the first matching, suppose that $T_2$ is matched with $H_3$, $T_4$ is matched with $\mathcal{P}_1$, and $T_5$ is matched with $H_4$, thus generating two new packing $\mathcal{P}_2=\{H_3,T_2\}$ and $\mathcal{P}_3=\{H_4,T_5\}$ and updating $\mathcal{P}_1=\{H_2,\{T_3,T_4\}\}$.
After the first matching, $\mathcal{L}'_1=\{H_1\}$, $\mathcal{L}'_2=\{T_1\}$, and
$\mathcal{L}'_3=\{\mathcal{P}_1,\mathcal{P}_2,\mathcal{P}_3\}$.
Then, we continue to the second matching, until the offloading success probability of the new packing cannot be improved over $\mathcal{L}'_3$.

Given a packing $\mathcal{P}$, a key step is to derive value coefficients, i.e., $v_{ij}$ in (\ref{MP}).
The value matrix's dimension is $(m_1+m_2+m_3)\times (m_1+m_2+m_3)$.
As we need to match the elements of $\mathcal{L}_1$, $\mathcal{L}_2$, and $\mathcal{L}_3$ with each other, the value matrix includes nine submatrices referred to as block 1,$\dots$,block 9, expressed as
\begin{equation}
\begin{aligned}
\bm{v}={}&
\begin{bmatrix}
[\mathcal{L}_1 \leftrightarrow \mathcal{L}_1] & [\mathcal{L}_1 \leftrightarrow \mathcal{L}_2] & [\mathcal{L}_1 \leftrightarrow \mathcal{L}_3]\\
[\mathcal{L}_2 \leftrightarrow \mathcal{L}_1] & [\mathcal{L}_2 \leftrightarrow \mathcal{L}_2] & [\mathcal{L}_2 \leftrightarrow \mathcal{L}_3]\\
[\mathcal{L}_3 \leftrightarrow \mathcal{L}_1] & [\mathcal{L}_3 \leftrightarrow \mathcal{L}_2] & [\mathcal{L}_3 \leftrightarrow \mathcal{L}_3]
\end{bmatrix}
\\
={}&
\begin{bmatrix}
[1] & [2] & [3]\\
[4] & [5] & [6]\\
[7] & [8] & [9]
\end{bmatrix}.
\end{aligned}
\label{Vmatrix}
\end{equation}
In (\ref{Vmatrix}), $\leftrightarrow$ is used as a notation for matching.
As $\bm{v}$ is a symmetric matrix, we need to compute only blocks 1, 4, 5, 7, 8, and 9.

\emph{Blocks 1, 4, and 5}: In block 1, a matching of any two unused helpers is not feasible.
Thus, the value can be set to $-\infty$.
In block 4, a matching of an unused helper with an unassigned task is feasible if the size of the task does not exceed the communication capacity with the helper.
The value is the offloading success probability that the task is assigned to the helper.
In block 5, a matching of any two unassigned tasks is not feasible.
Thus, the value can be set to $-\infty$.

\emph{Block 9}: To compute the value of a matching of any two elements in packing $\mathcal{P}$, e.g., $(j_1,\mathcal{G}_1)$ and $(j_2,\mathcal{G}_2)$, we divide the matching into three different cases.
The first two cases are when all tasks in $\mathcal{G}_1$ and $\mathcal{G}_2$ are allocated to one of the two helpers, i.e., $j_1$ or $j_2$.
It is easy to compute the value of the two cases by comparing the total size of the tasks with the capacity of each helper.
The third corresponds to the case when some tasks become reallocated from one helper to the other.
For this case, we need to compute the optimal reallocation of tasks.
This can be accomplished by solving an integer programming problem.
Denote by $w_i$ a binary variable, which is one if and only if task $i$, $i\in \mathcal{G}_1$, is reallocated to helper $j_2$.
Denote by $y_i$ a binary variable, which is one if and only if task $i$, $i\in \mathcal{G}_2$, is reallocated to helper $j_1$.
Denote by $\xi_w$ and $\xi_y$ the spare capacities at helpers $j_1$ and $j_2$, respectively.
Thus, the optimal task reallocation problem can be formulated as
\begin{figure}[!h]
\begin{subequations}
\begin{alignat}{2}
&\max\limits_{\bm{w},\bm{y}}\quad \sum_{i\in \mathcal{G}_1} (p_{ij_2}-p_{ij_1}) w_i+\sum_{i\in \mathcal{G}_2} (p_{ij_1}-p_{ij_2}) y_i
 \label{ObjSwap} \\
\text{s.t}. \quad
& \sum_{i\in \mathcal{G}_1} l_i w_i-\sum_{i\in \mathcal{G}_2} l_i y_i \le \xi_y, \label{SwapC1} \\
& -\sum_{i\in \mathcal{G}_1} l_i w_i+\sum_{i\in \mathcal{G}_2} l_i y_i \le \xi_w, \label{SwapC2}\\
& w_i, y_i \in \{0,1\}.
\end{alignat}
\label{SwapOpt}
\end{subequations}
\end{figure}

In (\ref{SwapOpt}), the objective function (\ref{ObjSwap}) is to maximize the offloading success probability by the reallocation of tasks.
Constraints (\ref{SwapC1}) and (\ref{SwapC2}) denote that the reallocation of tasks cannot exceed the spare capacities.
We propose a fast reallocation algorithm to solve the problem.
Algorithm 3 describes the process of the task reallocation algorithm.
In Lines 1-2, we use a vector to store the difference value results if each task is reallocated.
In Line 3, we sort the vector elements in descending order.
In Line 4, for the positive elements, we reallocate the corresponding tasks from a helper to another helper one by one if the remaining capacity of the latter is greater than the size of the task, and update the capacities of the two helpers.
By comparing the values of three different cases and selecting the best, we derive the matching value of Block 9. The complexity of Algorithm 3 is $O(R^2)$.
 \begin{algorithm}
\caption{Reallocation algorithm}
\begin{algorithmic}[1]
\algsetup{linenosize=\tiny}
\small

\STATE For each $i$, $i \in \mathcal{G}_1$, compute $\delta_{ij_1}=p_{ij_1}-p_{ij_2}$, and the result is stored in vector $\Delta$;
\STATE For each $i$, $i \in \mathcal{G}_2$, compute $\delta_{ij_2}=p_{ij_1}-p_{ij_1}$, and the result is stored in vector $\Delta$;
\STATE Sort $\Delta$ in descending order.
\STATE For the elements greater than zero, reallocating the corresponding task from a helper to another helper one by one if the spare capacity of the latter is greater than the size of the task, and update the capacities of the two helpers.
\end{algorithmic}
\end{algorithm}

\emph{Blocks 7 and 8}: In block 7, in order to compute the value of matching an element $(j_1,\mathcal{G}_1)\in \mathcal{L}_3$ with an unused helper $j_2 \in \mathcal{L}_1$, we use an auxiliary empty set $\mathcal{G}_2$ and $(j_2,\mathcal{G}_2)\in \mathcal{L}_1$.
In block 8, in order to compute the value of matching an element $(j_1,\mathcal{G}_1)\in \mathcal{L}_3$ with unassigned tasks $\mathcal{G}_2 \in \mathcal{L}_1$, we use an auxiliary helper $j_2$.
The capacity of $j_2$ is zero.
Thus, computing the values of blocks 7 and 8 are similar to that of block 9.

The idea of the RMA algorithm is to explore different packing solutions.
The algorithm is started via considering any feasible packing, i.e., initializing $\mathcal{L}_1$, $\mathcal{L}_2$, $\mathcal{L}_3$, and $MaxI$, where $MaxI$ denotes the maximum number of iterations.
It ends until the result of (\ref{Packi}) cannot be improved.
Algorithm 4 summarizes the overall process of RMA.

 \begin{algorithm}
\caption{RMA algorithm}
\begin{algorithmic}[1]
\algsetup{linenosize=\tiny}
\small

\STATE \textbf{Initialize} $Opt=-\infty$, $label=1$, $CountI=1$, $\mathcal{L}_1$, $\mathcal{L}_2$, $\mathcal{L}_3$, and $MaxI$;
\WHILE{$label=1$ and $CountI \le MaxI$}
\STATE Compute $\bm{v}$;
\STATE Solve (\ref{MP}) and obtain the result $p'$ by (\ref{Packi});
\IF{$p'> Opt$}
\STATE $Opt=p'$;
\STATE Update $\mathcal{L}_1$, $\mathcal{L}_2$, and $\mathcal{L}_3$;
\ELSE
\STATE $label=0$;
\ENDIF
\STATE $CountI=CountI+1$;
\ENDWHILE
\end{algorithmic}
\end{algorithm}

\section{Performance Evaluations}
We have proposed the TSDP and RMA algorithms for the uniform and general scenarios of TOMP, respectively.
We evaluate the effectivenesses of the two algorithms by comparing them to the following algorithms.
\begin{itemize}

\item
Monte Carlo search algorithm (MCSA): The tasks are processed one by one.
Each task is randomly assigned to a helper if the helper's spare capacity can accommodate task.
Then, we derive the average offloading success probability of all the tasks.
The above process is iterated 10000 times.
The maximization offloading success probability is selected as the solution of MCSA.
\item
Greedy algorithm (GA): The tasks are processed one by one.
For any task $i$, all the helpers are sorted in descending order by ($\frac{1}{\mu_j}+\frac{1}{\gamma_j}+\frac{1}{\xi_{ij}}$).
The task is assigned to the first helper if the helper's spare capacity can accommodate task.
Otherwise, the task is assigned to the latter helpers.
\item
Upper bound: In uniform scenarios, an upper bound is derived via (\ref{MATOre}).
\end{itemize}

The sizes of tasks are randomly selected in $[1, l_{max}]$.
The communication capacities to helpers are randomly selected in $[1, E_{max}]$.

\subsection{Validation of Average Offloading Success Probability}
Given the distribution of $t^c_{jk}$, $t^s_{jk}$, and $t_{ij}$, the closed-form expression of the offloading success probability can be derived.
In performance evaluations, we assume that $t^c_{jk}$ and $t^s_{jk}$ follow an exponential distribution, respectively.
That is
\begin{equation}
\left\{
\begin{aligned}
{}& f_j(t)=\mu_je^{-\mu_j t},\\
{}& w_j(t)=\gamma_je^{-\gamma_j t}.
\end{aligned}
\right.
\end{equation}
This assumption is based on three aspects.
First, the works in \cite{Deng2020Model,Cost2018Deng,Wang2018twc,Deng2019Model} have used exponential distribution to describe the nodes' mobility.
Second, the work in \cite{Zhu2010Recognizing} analyzes the real-world nodes' mobility, and finds that the tail behavior of ICTP can be characterized by an exponential distribution.
Finally, an exponential distribution can characterize at least $80\%$ of CTP distributions by investigating realistic mobility traces \cite{YLi2013Revealing}.
We assume that $h_{ij}(t)$ follows an Erlang distribution,
\begin{equation}
\begin{aligned}
h_{ij}(t)=\frac{\xi_{ij}(\xi_{ij} t)^{n_h-1}}{(n_h-1)!}e^{-\xi_{ij} t}.
\end{aligned}
\end{equation}
Thus, their Laplace transforms are expressed as
\begin{equation}
\left\{
\begin{aligned}
{}& f^*_j(s)= \frac{\mu_j}{\mu_j+s},\\
{}& w^*_{j}(s)=\frac{\gamma_{j}}{\gamma_{j}+s},\\
{}& h^*_{ij}(s)=(\frac{\xi_{ij}}{\xi_{ij}+s})^{n_h}.
\end{aligned}
\label{ghs}
\right.
\end{equation}

Plugging (\ref{ghs}) into (\ref{Resid}), we derive the expression of $p_{ij}$.
 For example, when $n_h=1$, $p_{ij}$ is expressed as
\begin{equation}
\begin{aligned}
p_{ij}= \frac{\xi_{ij}+\gamma_j}{\mu_j+\xi_{ij}+\gamma_j}.
\end{aligned}
\label{one}
\end{equation}
When $n_h=2$, $p_{ij}$ is expressed as
\begin{equation}
\begin{aligned}
p_{ij}= &(\frac{\xi_{ij}}{\mu_j+\xi_{ij}})^2\frac{1}{1-q}+(\frac{\xi_{ij}}{\mu_j+\xi_{ij}})^2\frac{q}{(1-q)^2}\\
&+\frac{\xi^2_{ij}}{(\mu_j+\xi_{ij})(\gamma_j+\xi_{ij})}\frac{q}{(1-q)^2},
\end{aligned}
\label{two}
\end{equation}
where
\begin{equation}
\begin{aligned}
q=\frac{\mu_j\gamma_j}{(\mu_j+\xi_{ij})(\gamma_j+\xi_{ij})}.
\end{aligned}
\end{equation}

\begin{table}[h]
		\centering
		\begin{tabular}{|c|c|c|c|}\hline
			Parameters setting&Closed-form& Simulation & Gap\\\hline
			$n_h=1$&0.7873&0.7887&0.18$\%$\\\hline
            $n_h=2$&0.7208&0.7170&0.53$\%$\\\hline
		\end{tabular}
		\caption{Validation of $p_{ij}$.}
		\label{NumericalSimulation}
	\end{table}
We validate the accuracy of the closed form of $p_{ij}$ by a Monte Carlo
simulation, $i\in \mathcal{R}, j\in \mathcal{H}$.
For any $p_{ij}$, in order to derive a performance metric, we simulate a certain number of offloading process, denoted by $N_{off}$.
Here, $N_{off}=50000$.
In each simulation process, the communication time period $t^c_{kj}$, the inter-communication time period $t^s_{kj}$, and the processing cycles $t_{ij}$ are randomly generated via their corresponding distribution models, respectively.
Then, we count the total number of success offloading.
We derive the results of $p_{ij}$ by (\ref{one}) and (\ref{two}).
Table I compares the $p_{ij}$ and simulation results of $p_{ij}$ when $n_h=1$ and $n_h=2$.
It can observed that the gap is less than $0.55\%$, validating the accuracy of the closed-form expression.

\subsection{The Uniform Scenario}

Figs. \ref{ImpactH} and \ref{ImpactR} evaluate the impact of $H$ and $R$, respectively.
In the two figures, we use the result of TSDP to initialize RMA.
Overall, the average offloading success probability linearly increases with respect to $H$, and decreases with respect to $R$.
This is expected, because more helpers will help the requester offload its tasks.
Conversely, more tasks compete the fixed computation resources with the increase of $R$, and the average amount of offloading decreases.
For the TSDP algorithm, it is close to the bound no matter $R$ and $H$ increase or not.
Moreover, its solution is equal to the upper bound in some instances, indicating that TSDP obtains the optimum.
For the RMA algorithm, its performance is the same as TSDP, which means that given a good initialization, it is hard to improve the performance of RMA by solving a series of matching problems.
Finally, the TSDP and RMA algorithms outperform the MCSA algorithm.

\begin{figure}[ht]
\centering
\includegraphics[scale=0.6]{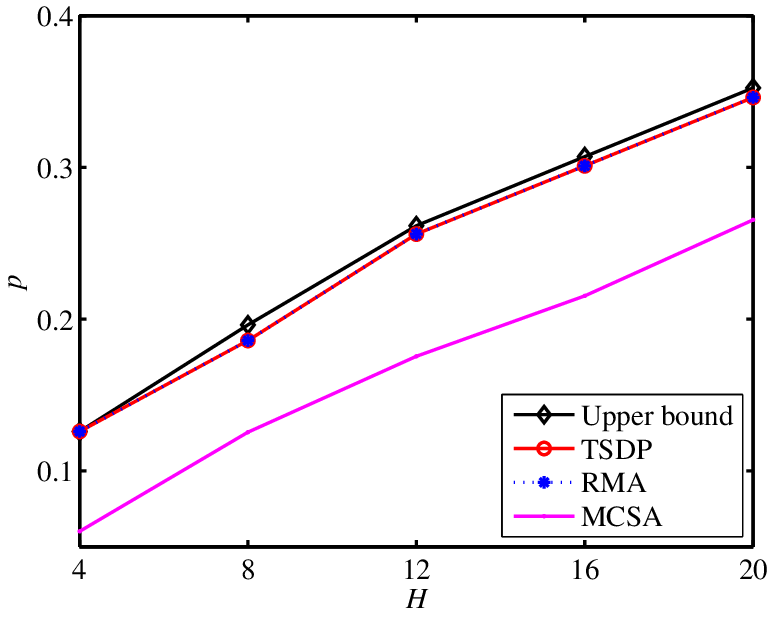}
\begin{center}
\caption{Impact of $H$ in the uniform scenario.}
\label{ImpactH}
\end{center}
\end{figure}
\begin{figure}[ht]
\centering
\includegraphics[scale=0.6]{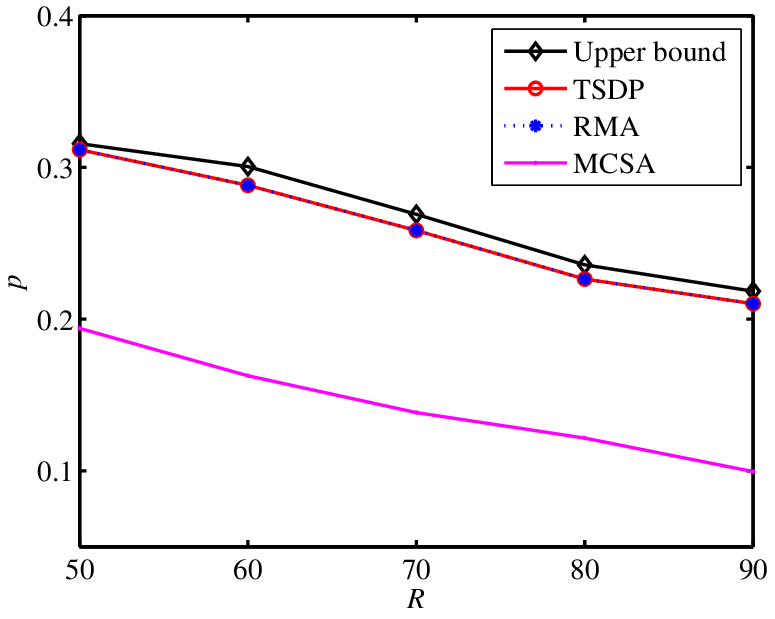}
\begin{center}
\caption{Impact of $R$ in the uniform scenario.}
\label{ImpactR}
\end{center}
\end{figure}

\subsection{General Scenarios}
\begin{figure}[ht]
\centering
\includegraphics[scale=0.6]{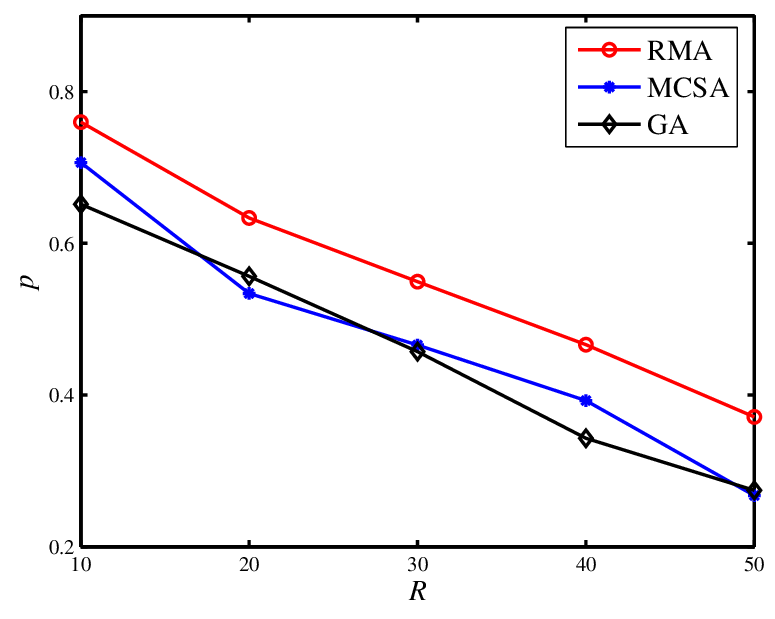}
\begin{center}
\caption{Impact of $R$ in general scenarios.}
\label{ImpactRGene}
\end{center}
\end{figure}
\begin{figure}[ht]
\centering
\includegraphics[scale=0.6]{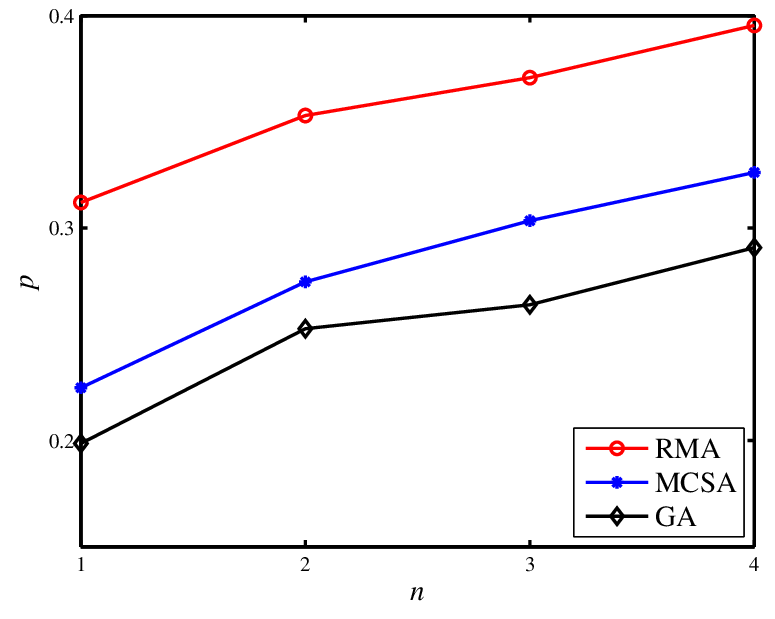}
\begin{center}
\caption{Impact of $\mu$ where $\mu$ folows the distribution of $\Gamma(4.43/n, 1/1088)$ in general scenarios.}
\label{ImpactMu}
\end{center}
\end{figure}
\begin{figure}[ht]
\centering
\includegraphics[scale=0.6]{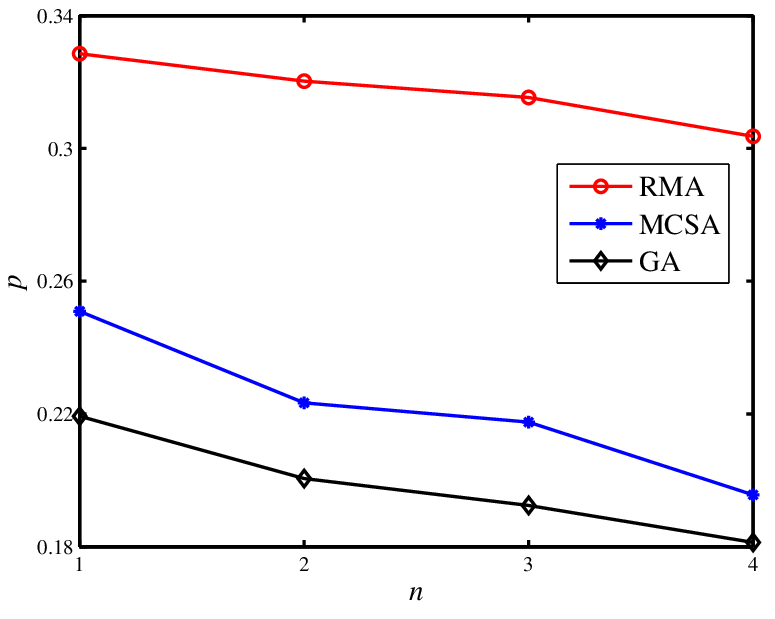}
\begin{center}
\caption{Impact of $\gamma$ where $\gamma$ folows the distribution of $\Gamma(4.43/n, 1/1088)$ in general scenarios.}
\label{ImpactGamma}
\end{center}
\end{figure}
\begin{figure}[ht]
\centering
\includegraphics[scale=0.6]{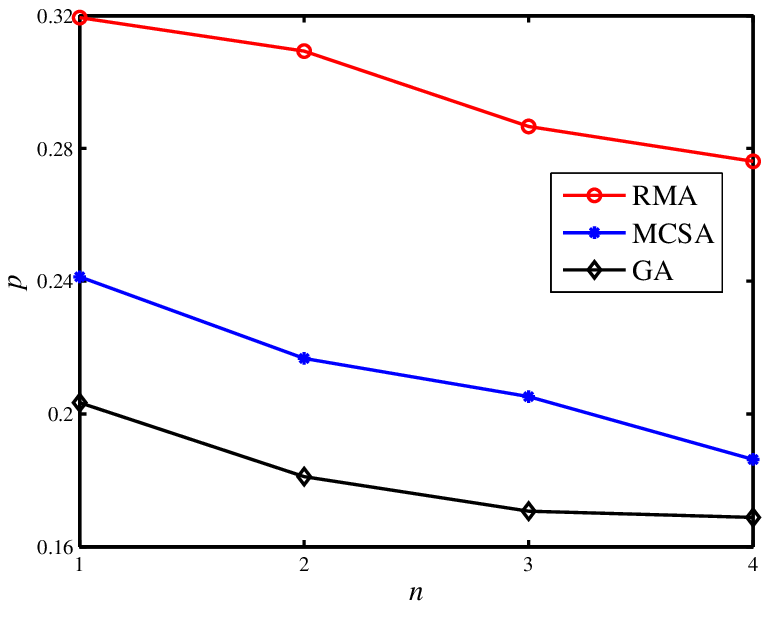}
\begin{center}
\caption{Impact of $\xi$ where $\xi$ folows the distribution of $\Gamma(4.43/n, 1/1088)$ in general scenarios.}
\label{ImpactXi}
\end{center}
\end{figure}

Fig. \ref{ImpactRGene} evaluates the impact of $R$ in general scenarios of TOMP.
It can be observed that RMA outperforms both MCSA and GA.
When $R=10$, RMA outperforms MCSA by $7.5\%$ and GA by $16.6\%$.
When $R=50$, RMA outperforms MCSA by $38.9\%$ and GA by $35.2\%$.
This phenomenon manifests that RMA is suitable for large-scale scenarios.
Fig. \ref{ImpactMu}-\ref{ImpactXi} evaluate the impact of $\bm{\mu}$, $\bm{\gamma}$, and $\bm{\xi}$.
In these figures, $\bm{\mu}$, $\bm{\gamma}$, and $\bm{\xi}$ are generated by a Gamma
distribution $\Gamma(4.43/n, 1/1088)$, where $n$ is a constant.
The following insights are obtained.
First, in Fig. \ref{ImpactMu}, when $n$ increases, the average offloading success probability increases.
For example, the probability increases from $31.2\%$ to $39.54\%$ for RMA.
This is because increasing $n$ causes that $\bm{\mu}$ decreases, such that the average communication time increases.
The helpers have more opportunity to return the computation results.
Second, in Fig. \ref{ImpactGamma}, when $n$ increases, the average offloading success probability decreases.
This is because increasing $n$ causes that $\bm{\gamma}$ and $\bm{\xi}$ decrease, such that the average inter-communication time increases.
Thus, the helpers are disconnected with the requester with high probability.
In Fig. \ref{ImpactXi}, when $n$ increases, the average offloading success probability decreases.
This is because increasing $n$ causes that $\bm{\xi}$ decreases, such that the average processing time increases.
The completion time of tasks are more likely fall in the inter-communication time.
Finally, RMA obviously outperforms MCSA and GA.

\section{Conclusions}
In this paper, we have investigated task offloading in MEC systems with uncertain computation
processing cycles and  intermittent communications.
First, we have derived a closed-form expression of the average offloading success probability of tasks, and formulated a task offloading maximization problem.
Then, we have proven that the problem is NP-hard.
For problem solving, we have proposed a fast and effective TSDP algorithm for the uniform scenario.
By solving the uniform case, we derived an upper bound enabling to benchmark.
For general scenarios, we have proposed a scalable RMA algorithm.
Finally, in performance evaluation, we have validated the accuracy of the
closed-form expression of the offloading success probability by Monte
Carlo simulation.
The gaps between the closed-form expression and simulation results are less than $0.55\%$, manifesting the accuracy of the closed-form expression.
In addition, we evaluate the performance of the
proposed algorithms by comparing them to other algorithms.
For the uniform scenario, TSDP is close to the upper bound as the gap to the upper bound does not
exceed $4\%$.
For general cases, RMA outperforms other algorithms.

\bibliographystyle{IEEEtran}
\bibliography{IEEEabrv,ForIEEEBib}

\end{document}